\documentclass[a4paper,UKenglish]{lipics}

\usepackage[ruled,vlined]{algorithm2e}
\usepackage{algorithmic}

\newtheorem{observation}{Observation}

\title{SONIK: Efficient In-situ All Item Rank Generation using Bit Operations}

\author{Sourav Dutta}
\affil{Max Planck Institute for Informatics,\\
  Saarbr\"ucken, Germany.\\
  \texttt{sdutta@mpi-inf.mpg.de}}
\authorrunning{S. Dutta} %mandatory. First: Use abbreviated first/middle names. Second (only in severe cases): Use first author plus 'et. al.'

\subjclass{F.2.2 Sorting and Searching} % mandatory: Please choose ACM 1998 classifications from http://www.acm.org/about/class/ccs98-html . E.g., cite as "F.1.1 Models of Computation". 
\keywords{All Item Rank Computation, Integer Sorting, Non-comparison Sort, Bit Operations, Linear Time} % mandatory: Please provide 1-5 keywords

\begin{document}

\maketitle

\begin{abstract}
Sorting, a classical combinatorial process, forms the bedrock of numerous algorithms with varied 
applications. A related problem involves efficiently finding the corresponding ranks of all the 
elements -- catering to rank queries, data partitioning and allocation, etc. Although, the element 
ranks can be subsequently obtained by initially sorting the elements, such procedures involve 
$O(n \log n)$ computations and might not be suitable with large input sizes for hard real-time 
systems or for applications with data re-ordering constraints.

This paper proposes $SONIK$, a non-comparison linear time and space algorithm using bit operations 
inspired by {\em radix sort} for computing the ranks of all input integer elements, thereby providing 
implicit sorting. The element ranks are generated {\em in-situ}, i.e., directly at the corresponding 
element position without re-ordering or recourse to any other sorting mechanism.
\end{abstract}

\section{Introduction and Related Work}
\label{sec:intro}

\textbf{Preliminaries:} Given a collection of $n$ integer elements, $\sigma = \{e_1, e_2, \cdots, e_n\}$, 
{\em sorting} refers to the procedure of arranging the elements in accordance with a definitive ordering, 
e.g., numeric or lexicographic. Mathematically, sorting produces a permutation of the input items such 
that the elements in the final list are ordered as $e_i \leq e_j, \forall i \leq j$.

Efficient sorting algorithms provide a classical area of research in the domain of algorithms and beyond. 
The literature provides an enormous wealth of applications involving sorting, as a major operational step, 
such as comparison and searching, statistical analysis, and priority-based scheduling to name a few. Advanced 
algorithms such as knapsack problem~\cite{kp,num}, minimum spanning tree~\cite{mst,fib,krus}, and network 
analysis also employ sorting as an intermediate stage. Several sorting algorithms such as bubblesort~\cite{knuth,cormen}, 
quicksort~\cite{quick}, mergesort~\cite{knuth}, heapsort~\cite{heap}, etc., based on different strategies such 
as {\em divide-and-conquer}, {\em exchange}, {\em partitioning}, and {\em greedy approach}~\cite{cormen} have 
been proposed to cater to different applications and scenarios. 

Sorting algorithms can be classified into two categories -- {\em comparison sort} involving element-wise 
comparison or {\em counting based sort} using the count of items. Comparison sort such as mergesort, 
heapsort, etc., have been shown to be theoretically bounded by $O(n \log n)$~\cite{cormen}, where $n$ represents 
the number of elements. On the other hand, counting-based sorts are bounded by $O(n)$ assuming certain 
characteristics of the input data such as uniform distribution (in bucket sort~\cite{cormen,buck}) or a small 
distribution range of the input element (in counting sort~\cite{count}). Further, implementational optimizations 
have also been studied for improving the practical efficiency of common sorting algorithms by using intelligent data 
structures, such as multi-key quicksort~\cite{multi} and trie-based radix sort.

Sorting algorithms are also characterized by certain behavioral properties, such as {\em in-place} (involving 
the use of $O(1)$ extra storage space), {\em stable} (preserving the appearance order of duplicate elements in 
the sorted order)~\cite{cormen}, and {\em adaptive} (taking advantage of pre-existing ordering in the input)~\cite{adapt}. 
The crucial nature of sorting techniques in various domains has also led to the development of several parallelized 
and disk-aware sorting procedures such as bitonic sort~\cite{bitonic}, odd-even sort~\cite{oe}, etc.

\textbf{Motivation:} An interesting and closely related problem involves the computation of the {\em ranks} of all 
the elements in an input list, where the rank of an item is defined as its position in the final sorted list. For 
example, given an input collection $\sigma = \{7, 2, 1, 5, 4\}$ the corresponding rank list for the elements is 
$\rho = \{5, 2, 1, 4, 3\}$, based on the non-decreasing key sorted order $\{1, 2, 4, 5, 7\}$ of $\sigma$. In fact, several 
applications involving scheduling, order-statistics, job priority, data partitioning, etc.,~\cite{cormen,knuth} 
use element ranks as the working metric. For example, consider the scenario of task scheduling from a job queue 
within a server pool with heterogeneous configurations, and the rank of a job to denote its priority. As such, 
the scheduling algorithm needs to compute the rank of all the jobs in the queue for the scheduler to 
appropriately allocate the job with the highest priority (i.e., highest rank) to the server with the greatest 
compute power and/or resource availability, and so on. Further, certain restrictions as to the re-ordering of 
items might apply in certain application scenarios. Hence, efficient approaches for computation of all item 
ranks to facilitate enhanced downstream processing provide an interesting area of study for such scenarios. 
Unfortunately, limited work exists in this regard, to the best of our knowledge.

\textbf{Application:} Consider an input list of $n$ objects, where each object is represented by a $d$ dimensional 
vector of integers corresponding to the values of its $d$ dimensions. Applications such as {\em multi-criteria searching} 
require the input objects to be ordered such that most (or all) of the $d$-dimensional values are nearly (or completely) 
sorted. As such, the objects need to be independently sorted based on their $i^{th}$ dimension to generate $d$ such ordered 
lists. Finally, the $d$ ordered lists are to be suitably combined into a final sorted list of the $n$ objects such that 
most of the dimensions are nearly sorted. Observe that, an expensive join operation is required to be performed across 
the individual sorted lists (for each of the dimensions) as the different lists might have different ordering of the 
objects (based on the dimensional values). However, an efficient approach finding the ranks of all the objects for the 
individual dimensions in a pre-defined or their input appearance order would eliminate the need for the expensive merge 
procedure, thereby significantly improving the performance of the application. In this paper, we aim to tackle this 
{\em all-item ranking} problem by proposing an effective algorithm to compute a rank list containing the ranks of the 
input elements in the order of their appearance, i.e., in-situ.

\textbf{Problem Statement:} Formally, {\em {\texttt given an input list of $n$ elements, $\sigma = \{e_1, e_2, \cdots, e_n\}$, 
the problem of {\em all-item ranking} entails the efficient computation of the rank list $\rho = \{r_1, r_2, \cdots, r_n\}$, 
where $r_i$ denotes the {\em rank} of element $e_i, i \in [1,n]$, with rank $r_i$ referring to the position of element $e_i$ 
in the sorted order of $\sigma$. We further aim to obtain the element ranks possibly without item re-ordering or 
use of sorting techniques which might be restricted in certain application scenarios (e.g., distributed job scheduling).}}

\textbf{State-of-the-art:} Na\"ively, we could initially sort the input list and thereafter obtain the ranks of individual 
elements by searching its position within the ordered list. Observe that, sorting of the input list can be 
performed in $O(n)$ best case by use of non-comparison based sorting approaches. However, such methods assume certain 
characteristics of the input elements (such as small range, uniform distribution, etc., as discussed earlier). Moreover, 
the computation of the rank list for the input elements (in the order of appearance in the input list) involves subsequent 
searching of each of the elements in the sorted order for their positions (i.e., ranks). For example, in our previous example, 
for obtaining the rank of $e_1 = 7~ (\in \sigma)$ requires the searching of $e_1$ in the sorted list $\{1,2,4,5,7\}$. 
This involves additional $O(\log n)$ computations for each element, using binary search procedure. Hence, the total complexity 
for obtaining the rank list is bounded by $O(n \log n)$. 

Certain scenarios might restrict the modification of the original order of elements, and hence re-ordering of elements for in-place 
sort may be infeasible (e.g., re-ordering of jobs and data allocated to servers might be forbidden or have significant network 
communication overheads). However, an additional $O(n)$ space to create a copy of the data for sorting is acceptable in 
most cases, as the generation of the rank list would also require $O(n)$ extra space.

The {\em quickselect}~\cite{qs} algorithm working on the {\em divide-and-conquer} principle provides the state-of-the-art 
$O(n)$ approach to obtain the item with rank $r$ from an unsorted list. However, finding the ranks of all elements degenerates 
its performance to quadratic complexity, i.e., $O(n^2)$, for finding the ranks of all items, i.e., $\forall r \in [1,n]$. 

To alleviate such problems, the augmented {\em order-statistics tree}~\cite{cormen} was proposed to efficiently store and index the 
input elements in a tree-based structure by storing extra statistics at each node of the tree (e.g., number of descendants 
in the left subtree, etc.). Subsequent querying for the item having rank $r$ can then be performed in $O(h)$, where $h$ 
represents the height of the tree structure. The use of height-balanced tree data structures like {\em red-black tree}~\cite{rbt} 
or {\em AVL tree}~\cite{avl} provides an upper bound of $O(\log n)$ for $h$ and and for finding the item with rank $r$. Thus, the 
construction and representation of input elements using an order-statistics tree and rank computation for all the elements 
(i.e., rank list) therein can be performed in $O(n \log n)$ time. However, such approaches suffer from involved tree rotational 
procedures for height balancing and also incur huge space requirements due to extra information and pointer storage within the nodes 
of the data structure. 

Hence, we observe that state-of-the-art computation of the ranks of all input elements can be performed in $O(n \log n)$.

\textbf{Contribution:} In this paper, we propose the $SONIK$ algorithm for the {\em all-item ranking} problem, i.e., finding the 
rank list for an input list of integer elements in their input appearance order, in \textbf{$O(nk)$} time, where $n$ is the number 
of input elements and $k$ denotes the number of bits required to represent the elements. Considering, $k$ to be constant (in the 
order of a few tens, since 32/64 bits are sufficient to encode huge numbers) we obtain a {\em linear} time algorithm, and show 
that the space requirement is also linear in $n$. 

$SONIK$ employs bit comparison operations (similar to the working principle of LSB radix-sort~\cite{radix}) to provide an efficient 
and practical approach for real-time scenarios, and also an {\em anytime algorithm}~\cite{any} obtaining the partial ranks of elements 
based on the $i$ least-significant bits after $i$ iterations. We also show that no re-ordering of elements or dependence on other 
sorting approaches (as in radix-sort) is required by our approach leading to {\em in-situ} generation of the element ranks. Further, 
the generated ranks of the elements provide an inherent sorting of the input list in linear time.

\textbf{Outline:} The remainder of the paper is organized as follows: Section~\ref{sec:sonik} describes the 
detailed working of the proposed $SONIK$ algorithm. Section~\ref{sec:opt} discusses a few implementational 
optimizations along with avenues for generalization to handle different input scenarios and enforce sorting 
properties such as stability. Finally, Section~\ref{sec:conc} concludes the paper, followed by literary 
references in this problem domain.

\section{\texttt{SONIK} Algorithm}
\label{sec:sonik}

In this section, we discuss the detailed working principle of our proposed {\em SONIK} algorithm. Without loss of generality, we assume 
the input list of elements, $\sigma$, to consist of non-negative integers without the presence of any duplicates, and that the item ranks 
are to be computed based on the non-decreasing sorted order of $\sigma$. Generalizations for handling other input characteristics and scenarios 
will be later presented in Section~\ref{sec:opt}. 

As a running example, consider $\sigma = \{7,2,1,5,4\}$ to be the input list and $\rho = \{r_1, \cdots, r_5\}$ as the final element rank 
list to be computed, where $r_i$ represents the rank of element $e_i \in \sigma$. Further, let $k$ be the number of bits required to encode 
the elements in $\sigma$; hence $k=3$ for our above example.

The rank list, $\rho$ is at first initialized to $1$ and $SONIK$ operates on the bit representations of the input elements, with processing 
proceeding from the least-significant bit (LSB) to the most-significant bit (MSB), i.e., from the rightmost to the leftmost bit. Thus, bit 
encoding of the input items, $\sigma_{bit} = \{111, 010, 001, 101, 100\}$ and initial $\rho = \{1,1,1,1,1\}$ forms the starting configuration 
for our algorithm. Starting from the LSB, $SONIK$ iterates over the $k$ bits of the input elements, and at each iteration performs two major 
computational phases -- (1) {\em Rank Updation} and (2) {\em Rank Consolidation}. 

At the onset of iteration $i$, the $i^{th}$ least significant bits of all the elements in $\sigma$ are extracted and based on the associated 
bit values, two lists, $L_0^i$ and $L_1^i$, are created to store the current ranks of elements with their corresponding bits set to $0$ and 
$1$ respectively. 

\begin{observation}
\label{ob:rank}
Given only the least significant $i$ bits of non-negative integers, elements with the $i^{th}$ bit set (i.e., elements corresponding to the ranks in list 
$L_1^i$) have a higher numerical value compared to the elements associated with list $L_0^i$ (i.e., having $i^{th}$ bit value as $0$), and hence should have 
larger ranks (based on non-decreasing sorted order of $\sigma$).
\end{observation}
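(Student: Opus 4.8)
The plan is to prove the claim directly from the place-value (positional) interpretation of the binary representation, exactly mirroring the correctness argument for LSB radix sort. The single governing fact is that the $i^{th}$ least-significant bit carries weight $2^{i-1}$, and this weight strictly dominates the combined weight of all less-significant bits, since $\sum_{j=0}^{i-2} 2^{j} = 2^{i-1} - 1 < 2^{i-1}$. Everything else is bookkeeping around this inequality.

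First I would restrict attention to the low $i$ bits of each element and write, for any element whose $j^{th}$ least-significant bit is $b_j \in \{0,1\}$, its value truncated to these $i$ bits as $v = b_i \cdot 2^{i-1} + \sum_{j=1}^{i-1} b_j \cdot 2^{j-1}$. Next, for an element associated with $L_0^i$ I would use $b_i = 0$ to bound its truncated value from above by $v \le \sum_{j=1}^{i-1} 2^{j-1} = 2^{i-1} - 1$, while for an element associated with $L_1^i$ I would use $b_i = 1$ to bound its truncated value from below by $v \ge 2^{i-1}$. Combining the two bounds yields that the truncated value of every $L_1^i$ element strictly exceeds that of every $L_0^i$ element, irrespective of their remaining $i-1$ bits.

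Finally I would translate the numerical comparison into the rank statement: because ranks are assigned according to the non-decreasing order of $\sigma$, a strictly larger value on the low $i$ bits forces a strictly larger position, and hence a larger rank, in the ordering induced after the $i$ least-significant bits have been examined. I do not expect a genuine obstacle here; the only points requiring care are the consistent $1$-indexing of bit positions (so that the $i^{th}$ LSB has weight $2^{i-1}$) and the exact evaluation of the geometric sum $\sum_{j=0}^{i-2} 2^{j} = 2^{i-1}-1$, which is precisely what makes the dominance strict rather than merely non-strict. The standing assumption that $\sigma$ contains no duplicates further guarantees that no two elements tie, so the induced relation is a bona fide ranking.
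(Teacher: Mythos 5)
Your proof is correct and is precisely the (implicit) reasoning behind the paper's claim: the paper states this observation without proof, and your place-value argument --- that the $i^{th}$ LSB contributes weight $2^{i-1}$, which strictly dominates the sum $2^{i-1}-1$ of all lower bit weights, so every $L_1^i$ element's $i$-bit truncation strictly exceeds every $L_0^i$ element's --- is exactly the standard radix-sort justification the author relies on. One small caveat: your closing remark that the no-duplicates assumption prevents ties is overreaching, since distinct elements can agree on their low $i$ bits (ties within a single list are possible and are handled later by the rank-updation machinery), but this does not affect the cross-list comparison that the observation actually asserts.
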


For our example, during the first iteration (i.e., considering only the least significant bits) elements $\{7,1,5\} \in \sigma$ have the 
rightmost bit set to $1$ (from $\sigma_{bit}$) and hence their corresponding ranks are used to obtain list $L_1^1$ as $\{1,1,1\}$. Similarly, 
corresponding to the remaining elements, we obtain $L_0^1 = \{1,1\}$ as shown in Figure~\ref{fig:example}. Based on the above observation, 
considering only the LSB, elements $\{7,1,5\}$ (of list $L_1^1$) should be assigned larger ranks as compared to elements $\{2,4\}$ (in list $L_0^1$).

$SONIK$ then performs the {\em rank updation} and {\em rank consolidation} phases separately for the elements in lists $L_0^i$ and $L_1^i$, as 
discussed next.

\subsection{$L_0$ List Processing}
\label{ssec:l0}

Initially, the ranks of the elements corresponding to the list $L_0^i$ (obtained above) are appropriately processed and updated by the two stages of 
$SONIK$ as follows:

\noindent \textbf{Rank Updation Phase:} Using the above computed lists, $L_0^i$ and $L_1^i$, at each iteration the {\em rank updation} phase computes 
the following three operating variables:

\begin{itemize}
	\item ${\bf \pi_m^i}$ :-- denotes the {\em minimum} rank $m$ among the elements having the $i^{th}$ bit position set to $1$, i.e., 
minimum value present in list $L_1^i$;
	\item ${\bf \pi_{\leq m}^i}$ :-- depicts the {\em maximum} rank {\em lesser than or equal to} $\pi_m^i$ among the elements corresponding 
to list $L_0^i$, i.e., elements with $i^{th}$ bit set to $0$. The value of $\pi_{\leq m}^i$ is considered to be $0$ if no such rank value is found; and, 
	\item ${\bf \pi_{\geq m}^i}$ :-- represents the {\em minimum} rank {\em greater than or equal to} $\pi_m^i$ present in list $L_0^i$, i.e., 
among the ranks of elements with $i^{th}$ bit set to $0$. Alternatively, this value is set to $\pi_{\leq m}^i$ is no such rank is found.
\end{itemize}

From Observation~\ref{ob:rank}, based on the $i^{th}$ bit, the elements corresponding to list $L_1^i$ are numerically greater than the elements 
pertaining to list $L_0^i$, and hence their ranks need to be appropriately updated (increased). As such, the elements associated to list $L_0^i$ 
having a rank greater than the minimum rank among the elements of $L_1^i$ ($\pi_m^i$), are processed by $SONIK$ during the rank updation stage. 
On the other hand, the ranks of elements of $L_0^i$ with current ranks less than or equal to $\pi_m^i$ remain unchanged during this step, as 
shown for the first element in Figures~\ref{fig:update}(a) and (b).

\begin{figure}[t]
\centering
	\includegraphics[width=\columnwidth,height=1.5in]{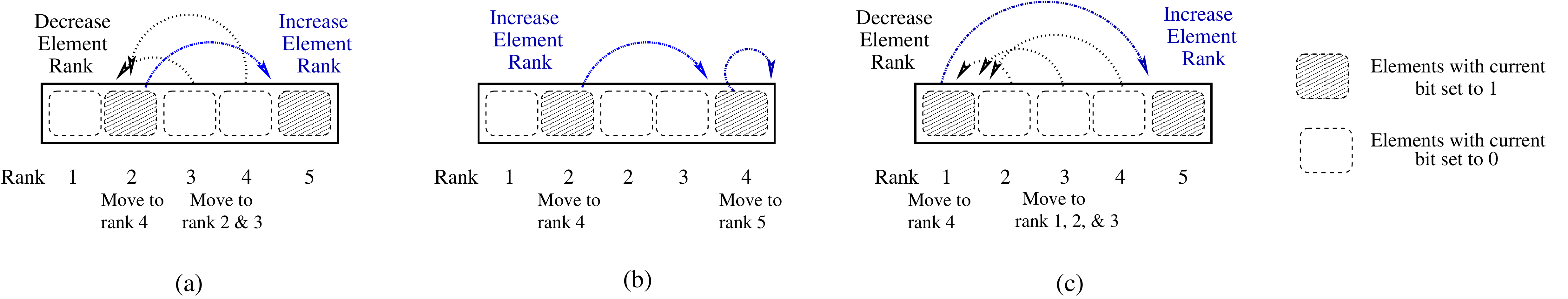}
	\caption{Working of the {\em Rank Updation} procedure in $SONIK$ for various scenarios.}
	\label{fig:update}
\end{figure}

Observe that, in general the element(s) in $L_0^i$ with rank $\pi_{\geq m}^i$ should now be re-assigned a rank of $\pi_m^i$, as it represents 
the currently numerically smallest element that was previously assigned a higher rank than $\pi_m^i$ (in list $L_1^i$) during the iterations based 
on $i-1$ LSB. Hence, in Figure~\ref{fig:update}(a), the item with rank $3$ should now be assigned a rank of $2$.

However, no such comparative deductions can be made for elements in $L_0^i$ that have the same rank value of $\pi_m^i$ (i.e., when $\pi^i_{\leq m} 
= \pi_m^i$). In these cases, the rank of such element(s) is kept unchanged during the current iteration and its final rank is computed by subsequent 
iteration steps. Hence, in unison, items in $L_0^i$ with ranks less than or equal to $\pi_m^i$ undergo no rank re-assignment as they are appropriately 
represented by their current ranks. In Figure~\ref{fig:update}(b), elements having rank $2$ and $3$ (depicted with unfilled boxes) represent 
such scenarios with unchanged current rankings.

Hence, for each element corresponding to list $L_0^i$ having a rank greater than $\pi_m^i$, $SONIK$ decrements the current rank by the {\em rank 
update} parameter value, $\delta_0$, defined as follows:
\begin{align}
\delta_0 &= \left\{
	\begin{array}{ll}
	0 & \mbox{  if  } \pi_m^i = \pi_{\leq m}^i \\
	\pi_{\geq m}^i - \pi_m^i & \mbox{  otherwise}
	\end{array}
\right. \\
\text{Thus,} &\text{ the updated element ranks are obtained by,} \nonumber \\ 
\rho_i &= \left\{
	\begin{array}{ll}
	\rho_i & \mbox{  if  } \rho_i \leq \pi_m^i \\
	\rho_i - \delta_0 & \mbox{  otherwise}
	\end{array}
\right. \qquad \qquad \text{[$\forall i, e_i \in L_0^i$]}
\end{align}

The pseudo-code for the {\em rank updation} phase is presented in Algorithm~\ref{algo:ru}, while Figure~\ref{fig:update} provides a pictorial description 
of the rank re-assignment strategy under different scenarios. The updated ranks of the elements corresponding to $L_0^i$ are then provided as input to the 
{\em rank consolidation} phase, wherein the ranks are consolidated to represent contiguous positions.

\begin{algorithm}[t]
\caption{{\em Rank Updation Procedure for $L_0^i$}, $RUL_0()$}
\label{algo:ru}
{\small{
	\begin{algorithmic}[1]
	\REQUIRE Input Element List ($\sigma$), Rank List ($\rho$), Element set $S_0$ corresponding to list $L_0^i$, and Operating variables 
	($\pi_m^i$, $\pi_{\leq m}^i$, and $\pi_{\geq m}^i$)
	\ENSURE Re-assignment of ranks to elements corresponding to list $L_0^i$.

	\medskip

	\STATE $A \gets \pi_m^i$, $B \gets \pi_{\geq m}^i$, and $C \gets \pi_{\leq m}^i$

	\IF{A = C}
		\STATE $\delta_0 \gets 0$
	\ELSE
		\STATE $\delta_0 \gets \left(B - A\right)$
	\ENDIF

	\FOR{each element $e_i \in \sigma$ and $e_i \in S_0$}
		\IF{$\rho_i > A$}
			\STATE $\rho_i \gets \rho_i - \delta_0$
		\ENDIF
	\ENDFOR

	\STATE Return $\rho$
	\end{algorithmic}
}}
\end{algorithm}

\noindent \textbf{Rank Consolidation Phase:} Given the updated element ranks as obtained above, consolidation stage simply processed the ranks and to 
make them contiguous, so as to eliminate ``holes'' in the element rankings obtained. For example, an input rank list $\tau = \{1, 2, 4, 6\}$ is 
consolidated to $\tau'= \{1,2,3,4\}$. We now make the following observation to enable the efficient performance of the {\em consolidation} phase for the 
$SONIK$ algorithm.

\begin{observation}
\label{ob:sort}
The updated rank list for the elements pertaining to list $L_0^i$ is in non-decreasing sorted order.
\end{observation}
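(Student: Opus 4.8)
The plan is to prove the statement by combining two ingredients: an invariant, carried across iterations, that the ranks entering iteration $i$ already respect the ordering induced by the $i-1$ least significant bits, and the monotonicity of the rank-updation map applied to $L_0^i$. First I would make the invariant precise: at the start of iteration $i$ the rank list $\rho$ encodes the ranking determined by the lower $i-1$ bits, so that for any two elements the one with the smaller $(i-1)$-bit value has the smaller current rank, and equal $(i-1)$-bit values carry equal ranks. Since every element of $L_0^i$ has its $i^{th}$ bit equal to $0$, the $i$-bit value of such an element equals its $(i-1)$-bit value, so ordering the $L_0^i$ elements by their $i$-bit value coincides with ordering them by their $(i-1)$-bit value. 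Restricting the invariant to $L_0^i$ then gives that, \emph{before} the updation phase, the ranks of the $L_0^i$ elements are already non-decreasing when the elements are listed in value order.

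Next I would show that the updation map $f$ -- which leaves a rank $\rho$ unchanged when $\rho \le \pi_m^i$ and replaces it by $\rho - \delta_0$ when $\rho > \pi_m^i$ -- is non-decreasing. The two cases in which the two compared ranks lie on the same side of the threshold $\pi_m^i$ are immediate, since there $f$ acts either as the identity or as a constant downward shift. The only delicate case is a ``boundary crossing'': a retained rank $\rho \le \pi_m^i$ compared against a shifted rank $\rho' > \pi_m^i$, which forces $\delta_0 > 0$ and hence $\pi_m^i \ne \pi_{\le m}^i$ and $\delta_0 = \pi_{\ge m}^i - \pi_m^i$. Here I would use $\rho + \delta_0 \le \pi_m^i + (\pi_{\ge m}^i - \pi_m^i) = \pi_{\ge m}^i$ together with the defining property of $\pi_{\ge m}^i$ as the smallest rank in $L_0^i$ that is at least $\pi_m^i$: because $\rho' > \pi_m^i$ is itself a rank in $L_0^i$ qualifying for that minimum, we get $\rho' \ge \pi_{\ge m}^i \ge \rho + \delta_0$, i.e. $f(\rho) = \rho \le \rho' - \delta_0 = f(\rho')$.

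Composing the two parts finishes the argument: the pre-updation $L_0^i$ ranks are non-decreasing in value order, and applying the non-decreasing map $f$ entrywise preserves this ordering, so the updated ranks remain non-decreasing in value order, which is exactly the claim. When $\delta_0 = 0$ (the case $\pi_m^i = \pi_{\le m}^i$) the map is the identity and the conclusion is immediate; likewise if no $L_0^i$ rank exceeds $\pi_m^i$ then no boundary crossing occurs and the untouched ranks inherit sortedness directly from the invariant.

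I expect the main obstacle to be the boundary-crossing case, since it is the only place where the precise definitions of $\pi_{\ge m}^i$ and $\delta_0$ are genuinely needed: the whole argument hinges on $\pi_{\ge m}^i$ being \emph{exactly} the smallest $L_0^i$-rank above the threshold, so that the shift $\delta_0$ can never push a high rank below a retained low rank. A secondary point worth checking is the invariant itself -- its base case at $i=1$ (all ranks equal) and its maintenance through the consolidation phase of the previous iteration -- since that is what guarantees the sortedness of the input fed to iteration $i$ in the first place.
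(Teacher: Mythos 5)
Your proposal is correct and takes essentially the same route as the paper, whose (much terser) two-sentence justification makes exactly your three points: ranks at or below $\pi_m^i$ are untouched, ranks above are all decremented by the common shift $\delta_0$ so their relative order is preserved, the shifted ranks land at or above $\pi_m^i$ and hence above the retained ones, and the input list is sorted by the invariant carried over from the previous iteration's updation/consolidation. Your explicit boundary-case computation via $\rho' \ge \pi_{\ge m}^i$, so that $f(\rho') = \rho' - \delta_0 \ge \pi_m^i \ge f(\rho)$, simply makes rigorous what the paper asserts without calculation (``the re-assigned ranks are greater than $\pi_m^i$''), as does your spelled-out induction hypothesis.
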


Since the ranks of all the candidate items (in $L_0^i$) are reduced by a common factor, the {\em rank update} parameter, during consolidation, 
the original ordering among these elements are inherently preserved. Further, since the re-assigned ranks are greater than $\pi_m^i$, and element ranks 
less than $\pi_m^i$ are not modified, the sorted property is preserved for the updated rank list (as the original rank list $\rho$ was initially sorted).

We now show that with the use of an extra storage space and two array position pointers, the {\em rank consolidation} phase can be performed in linear time 
in the input rank list size. An additional array $E$ of size $2*n$ (described later in Observation~\ref{ob:max}) is created, and each element $t_i \in \tau$ 
is mapped to the array position $E[t_i]$. The array position pointer, $p_1$ provides the next free slot, while the other position pointer $p_2$ refers to the 
next rank to be consolidated. The rank value of $E[p_2]$ is then accordingly set at position $E[p_1]$ based on the value at $E[p_1-1]$. This procedure is 
repeated until all the input ranks are consolidated as consecutive values.

For our example list $\tau$, we create and initialize $E[8] = \{1,2,-,4,-,6\}$ with the pointers initially set as $p_1 = p_2 = 1$. Iterating over 
the structure $E$, we obtain $p_1 = 3$ (the first free position of $E$) and $p_2 = 4$ (position of the next rank after $p_1$). The array $E$ is then 
updated as $E[p_1] = E[p_1-1]+1$ and $E[p_2] = -$. Hence, $E$ is modified to $\{1,2,3,-,-,6\}$. Finally, the first $|\tau|$ elements of 
$E=\{1,2,3,4,-,-\}$ are returned as the consolidated rank list for $\tau$. Algorithm~\ref{algo:rc} provides the pseudo-code for the rank consolidation 
procedure. Observe that, the sorted order of the input rank list enables the rank consolidation procedure to operate with a single pass of the array $E$, 
thereby running in linear time in the size of $\tau$.

\begin{algorithm}[t]
\caption{{\em Rank Consolidation Procedure}, $RCP()$}
\label{algo:rc}
{\small{
	\begin{algorithmic}[1]
	\REQUIRE Updated rank list $\tau$ of elements in list $L^i_j$ from the rank updation procedure
	\ENSURE Consolidated consecutive ranking for elements in $\tau$

	\medskip

	\STATE $n \gets |\tau|$, $p_1 \gets 1$, and $p_2 \gets 1$
	\STATE Create array $E$ of size $2*n$

	\FOR{element $t_i$ in $\tau$}
		\STATE $E[t_i] \gets t_i$
	\ENDFOR

	\WHILE{$p_1 \leq n \land p_2 \leq n$}
		\STATE $p_1 \gets $ next free position in $E$
		\STATE $p_2 \gets $ next rank value after position $p_1$

		\IF{$p_1 = 1$}
			\STATE $E[p_1] = E[p_2]$
		\ELSE
			\STATE $E[p_1] = E[p_1-1]+1$
		\ENDIF
		\STATE Remove entry $E[p_2]$
	\ENDWHILE

	\STATE $max \gets E[n]$
	\STATE Return first $n$ elements of $E$ as the consolidated rank list of $\tau$ along with $max$
	\end{algorithmic}
}}
\end{algorithm}

\subsection{$L_1$ List Processing} 

After the processing of elements of list $L_0^i$, the ranks of the corresponding items are appropriately updated and consolidated to reflect 
their current ranks based on the $i$ least significant bits. The ranks of the remaining elements corresponding to list $L_1^i$ are now adjusted 
for assignment of higher ranks (as they are numerically greater) based on $max$, the maximum rank assigned to an element during the above rank 
consolidation procedure on list $L_0^i$. Observe that, currently the minimum ranked element in $L_1^i$ (having a rank of $\pi_m^i$) should now 
be assigned a rank of $max+1$, and hence the current ranks of all elements pertaining to list $L_1^i$ are incremented by the update parameter, 
$\delta_1 = max + 1 - \pi_m^i$. Hence, $\rho_i = \rho_i + \delta_1$ $(\forall i, e_i \in L_1^i)$ forms the {\em rank updation} step for 
elements of list $L_1^i$ as shown in Algorithm~\ref{algo:ru1}.

The obtained output ranks for list $L_1^i$ from the above rank update procedure are then provided to the {\em rank consolidation} step for 
generating contiguous rankings, similar to the procedure described in Section~\ref{ssec:l0}.

\begin{algorithm}[t]
\caption{{\em Rank Updation Procedure for $L_1^i$}, $RUL_1()$}
\label{algo:ru1}
{\small{
	\begin{algorithmic}[1]
	\REQUIRE Input Element List ($\sigma$), Rank List ($\rho$), Element set $S_1$ corresponding to list $L_1^i$, $max$ returned by consolidation 
	phase on list $L_0^i$, and Operating variable $\pi_m^i$
	\ENSURE Re-assignment of ranks to elements corresponding to list $L_1^i$.

	\medskip

	\STATE $A \gets \pi_m^i$
	\STATE $\delta_1 \gets \left(max + 1 - \pi_m^i\right)$

	\FOR{each element $e_i \in \sigma$ and $e_i \in S_1$}
		\STATE $\rho_i \gets \rho_i + \delta_1$
	\ENDFOR

	\STATE Return $\rho$
	\end{algorithmic}
}}
\end{algorithm}

We now reason about the $2*n$ size bound of the extra storage array during the rank consolidation phase.

\begin{observation}
\label{ob:max}
The maximum rank obtained by an element during the rank updation stage is bounded by $2*n$.
\end{observation}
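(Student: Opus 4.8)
The plan is to observe that, within a single iteration, only the processing of $L_1^i$ can ever raise a rank. The $L_0^i$ sub-phase decrements ranks by $\delta_0 = \pi_{\geq m}^i - \pi_m^i \geq 0$ (nonnegative because $\pi_{\geq m}^i$ is by definition the smallest rank in $L_0^i$ that is at least $\pi_m^i$), so it can never push a rank above its current value, and if $\rho_i \leq \pi_m^i$ the rank is left fixed. Hence the largest rank occurring anywhere in the updation stage is attained immediately after the $L_1^i$ update, and equals $M_1 + \delta_1$, where $M_1$ denotes the largest rank held by an element of $L_1^i$ just before that update and $\delta_1 = max + 1 - \pi_m^i$. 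It therefore suffices to bound $M_1 + \delta_1$ by $2*n$.

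First I would establish, by induction on the iteration index $i$, the invariant that at the onset of every iteration each current rank in $\rho$ is at most $n$. The base case is immediate, since $\rho$ is initialized to all $1$'s. For the inductive step I would use that the consolidation phase outputs a contiguous block of ranks: after $L_0^i$ and $L_1^i$ have both been consolidated, the resulting ranks occupy $\{1,\dots,R\}$ with no holes, where $R$ is the number of distinct rank values present. Since there are only $n$ elements there are at most $n$ distinct ranks, so $R \leq n$ and the invariant is restored for iteration $i+1$.

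With the invariant in hand the two ingredients are bounded directly. By the invariant the pre-update ranks of $L_1^i$ are at most $n$, so $M_1 \leq n$. For $\delta_1$, note that $max$ is the largest rank produced by consolidating $L_0^i$; because consolidation only removes holes and therefore never increases any rank, and because the pre-consolidation ranks of $L_0^i$ are themselves at most $n$ by the invariant, we get $max \leq n$. Using $\pi_m^i \geq 1$ this gives $\delta_1 = max + 1 - \pi_m^i \leq max \leq n$. Combining, the peak rank $M_1 + \delta_1 \leq n + n = 2*n$, which is exactly the claimed bound and justifies the size of the array $E$ in the consolidation routine. (The degenerate cases where $L_0^i$ or $L_1^i$ is empty only lower the bound and may be dispatched in a line.)

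The hard part will be the inductive step, specifically the claim that the combined consolidated ranks form a hole-free block starting at $1$. This is where the precise definition of $\delta_0$ is needed: one must check that after the $L_0^i$ update the smallest surviving $L_0^i$ rank has been brought down to $1$, so that $L_0^i$ consolidates to $\{1,\dots,d_0\}$ and the subsequent shift of $L_1^i$ by $\delta_1$ places its minimum at exactly $max+1$, leaving no gap. I would verify this by a short case analysis on whether the global minimum rank $1$ lies in $L_0^i$ or $L_1^i$: in the former case it is left untouched since $1 \leq \pi_m^i$, while in the latter case $\pi_m^i = 1$ forces $\delta_0 = \pi_{\geq m}^i - 1$, which decrements the smallest $L_0^i$ rank to $1$. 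Everything beyond this is routine bookkeeping on the update formulas.
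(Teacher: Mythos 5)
Your proposal is correct, but it argues differently from the paper. The paper's proof is an extremal-configuration calculation: it \emph{assumes} (without proof) that the start-of-iteration ranks form a contiguous list from $1$ to $n$, then instantiates what it takes to be the worst case --- $\pi_m^i = 1$, with $L_1^i$ containing exactly two elements including the one of maximum rank $n$ and everything else in $L_0^i$ --- computes $max = n-2$, $\delta_1 = n-2$, and a peak rank of $2n-2$, and concludes the $2n$ bound (adding the remark, which you share, that $L_0^i$ processing only decrements ranks). You instead prove a universal inequality: establish by induction the invariant that all ranks at iteration onset are at most $n$, then bound the peak by $M_1 + \delta_1 \leq n + n$. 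What each buys: the paper's argument is short and exhibits near-tightness of the bound, but it never justifies why its chosen configuration is worst, and its claimed invariant of a ``contiguous rank list ranging from $1$ to $n$'' is not literally true in early iterations when ranks are duplicated (after iteration one on the running example the ranks are $\{2,1,2,2,1\}$, occupying only $\{1,2\}$); your formulation with $R \leq n$ distinct values occupying $\{1,\dots,R\}$ handles this correctly, and your case analysis on whether rank $1$ lies in $L_0^i$ or $L_1^i$ supplies the hole-freeness argument the paper omits entirely. Two small points to tidy: your parenthetical that $\delta_0 \geq 0$ silently skips the fallback case $\pi_{\geq m}^i := \pi_{\leq m}^i$ (where $\delta_0$ can be negative but is never applied, since no $L_0^i$ rank exceeds $\pi_m^i$ --- worth the one line you promised for degenerate cases), and note that consolidation preserves the minimum of its input rather than resetting it to $1$, which is exactly why your case analysis bringing the smallest surviving $L_0^i$ rank down to $1$ is needed and suffices.
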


\begin{proof}
Consider the current minimum rank $\pi_m^i$ of an element in $L_1^i$ to be $1$, i.e., numerically it denotes the minimum element based on 
its $i-1$ least significant bits. Observe that, at the start of the $i^{th}$ iteration, the previous rank updation and consolidation steps 
produce an ordered contiguous rank list ranging from $1$ to $n$, the number of input elements in $\sigma$. Assume list $L_1^i$ to also contain 
the element with the maximum current rank $n$. Considering all other elements of $\sigma$ to be in list $L_0^i$ (as shown in 
Figure~\ref{fig:update}(c)) and the list to have been processed as described in Section~\ref{ssec:l0}, the maximum rank value, $max$ returned 
by the consolidation procedure on $L_0^i$ can be $n-2$ (as list $L_1^i$ is assumed to contain two elements).

In our current setting, the rank update parameter, $\delta_1 = max + 1 - \pi_m^i = n-2 + 1 - 1 = n-2$. The rank updation procedure on list 
$L_1^i$ might thus generate a maximum of $n + \delta_1 = 2*n-2$ as the updated rank for the maximum ranked element in $L_1^i$. Hence, the 
maximum rank assigned is upper bounded by $2*n$, leading to the bound on the size of the extra storage array $E$ for the rank consolidation 
phase.

Further, rank updation for list $L_0^i$ decrements the current element ranks by the $\delta_0$ parameter value. As the current maximum rank of 
any element can be $n$, the maximum rank obtained after the updation phase of $SONIK$ remains bounded by $2*n$.
\end{proof}

\noindent \textbf{Overall Computation:} The element rank list, $\rho_i$, obtained from the above computations (i.e., the rank updation and rank 
consolidation stages) by $SONIK$ over the two lists $L_0^i$ and $L_1^i$, depicts the intermediate ranking output at the end of the $i^{th}$ 
iteration. The rank list, $\rho_i$ is then fed as an input for processing during the next iteration step. This procedure is repeated until all 
the $k$ bits of the input elements in $\sigma$ have been processed, and the final rank list, $\rho$ generated after the $k^{th}$ 
iteration provides the output for our algorithm, containing the final ranks of all the input elements in the order of their appearance 
in $\sigma$. The pseudo-code for the complete working of the proposed $SONIK$ algorithm is described in Algorithm~\ref{algo:sonik}. 
A working C++ code for $SONIK$ can be obtained from \url{https://www.dropbox.com/s/zhn4q1hnho61mg6/SONIK.cpp?dl=0}.

\begin{algorithm}[t]
\caption{{\em $SONIK$ Rank Generation Algorithm}}
\label{algo:sonik}
{\small{
	\begin{algorithmic}[1]
	\REQUIRE Input element list $\sigma$
	\ENSURE Element Rank list $\rho$ capturing the ranks of elements in $\sigma$

	\medskip

	\STATE $n \gets |\sigma|$
	\STATE Create rank list $\rho$ of size $n$ and initialize all entries to $1$
	\STATE $k \gets $ number of bits required to encode elements of $\sigma$

	\FOR{$i:=1$ to $k$}
		\STATE Extract the $i^{th}$ bit, $b_i^j$ of element $e_j$ of $\sigma$, for $j \in [1,n]$
		\STATE Create empty lists $L_0^i$ and $L_1^i$
		\IF{$b_i^j = 0$}
			\STATE Add current rank of element $e_j$, $\rho_j^i$ to list $L_0^i$
		\ELSE
			\STATE Add current rank of element $e_j$, $\rho_j^i$ to list $L_1^i$
		\ENDIF

		\STATE Compute operating variable $\pi_m^i$, $\pi_{\leq m}^i$, and $\pi_{\geq m}^i$ from lists $L_0^i$ and $L_1^i$

		\medskip

		\STATE {\em RANK UPDATION:} Perform $RUL0()$ on list $L_0^i$ with $\sigma$, $\rho^i$, and operating variables
		\STATE {\em RANK CONSOLIDATION:} Call procedure $RCP()$ for list $L_0^i$ with the above output

		\medskip
		\medskip

		\STATE {\em RANK UPDATION:} Execute $RUL1()$ on list $L_1^i$ with $\sigma$, $\rho^i$, and operating variables
		\STATE {\em RANK CONSOLIDATION:} Call procedure $RCP()$ for list $L_1^i$ with the above output
	\ENDFOR

	\STATE Output $\rho$ as the final rank list of elements of $\sigma$
	\end{algorithmic}
}}
\end{algorithm}

\begin{figure}[!ht]
\centering
	\includegraphics[width=\columnwidth]{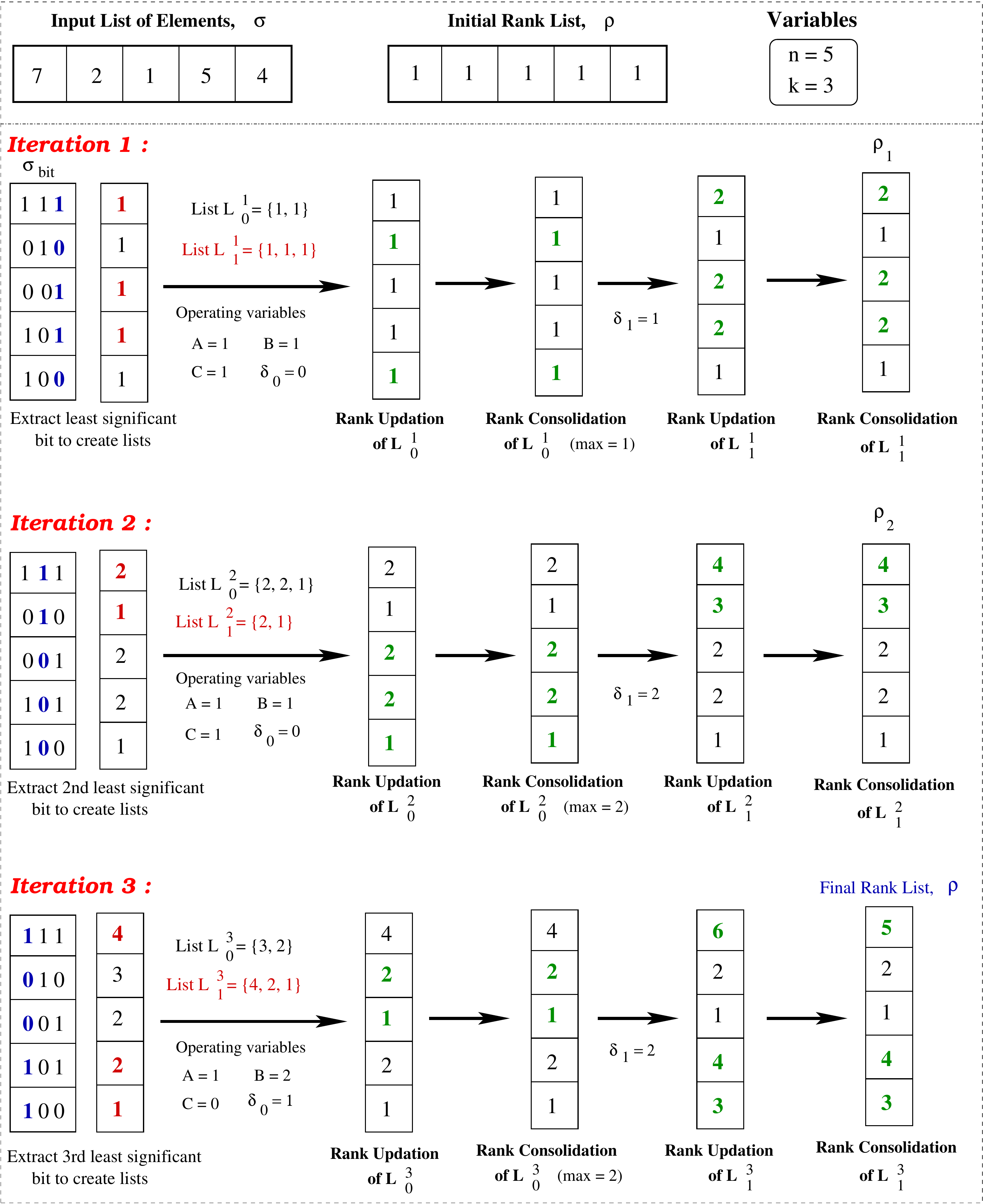}
	\caption{Pictorial depiction of the working of $SONIK$ on a running example.}
	\label{fig:example}
\end{figure}

\noindent \textbf{Working Example:} Continuing our previous example from Section~\ref{sec:sonik}, we now briefly walk through the operations 
of $SONIK$ as depicted in Figure~\ref{fig:example}. For the input element list $\sigma =\{7,2,1,5,4\}$, during the first iteration, list 
$L_1^1$ contains the current ranks of elements $\{7, 1, 5\}$ (having LSB set to $1$), while $L_0^1$ contains the ranks of the remaining 
items. Computing the operating variables, we obtain the update parameter $\delta_0 = 1$ for rank updation for list $L_0^1$. Since the updated 
ranks returned are already contiguous, no change in the element ranks is perceived during the rank consolidation phase, which returns $max=1$ 
representing the maximum rank assigned during this step. The minimum rank of list $L_1^1$ should now be set to $max+1$, hence the update 
parameter $\delta_1 = 1$ is used for re-assigning the element ranks in list $L_1^1$. Finally, after consolidation of $L_1^i$, the obtained rank 
list at the end of the first iteration, $\rho_1$ forms the input for the next iterations, wherein a similar procedure is followed for the 
remaining bits of the input elements.

Finally observe that, during the last iteration, the rank updation of $L_1^3$ generates a rank list of $\{6,4,3\}$, for the elements $\{7,5,4\}$, 
based on the update parameter $\delta_1 = 2$. The rank consolidation phase is now seen to provide a contiguous rank list, removing ranking ``holes'' 
in $L_1^1$ (i.e., the absence of rank $5$), to generate a consolidated rank list as $\{5, 4, 3\}$.

\noindent \textbf{Discussion:} $SONIK$ thus provides an efficient linear time and space approach (discussed in Section~\ref{ssec:anly}) for computing 
the ranks (based on the non-decreasing sorted order) of all the input elements. Observe that our approach does not involve any item re-ordering 
(which might be restricted in distributed settings) or dependence on other sorting techniques.

Interestingly, $SONIK$ depicts an {\em anytime algorithm}, wherein the rank computations can be terminated at the end of any iteration $i$, and the 
current rank list $\rho_i$ provides the ranking of elements based on only the $i$ least significant bits. This characteristic enables $SONIK$ to 
effectively cater to hard real-time approximate ranking application scenarios.

Further observe that, at each iteration, the working of our algorithm involves the processing of only the $i^{th}$ bits of all the input elements, 
without any other dependencies. As such, {\em on-demand} extraction of the appropriate bits incurs low network communication overhead (for items 
stored in distributed environments) at any particular time frame.

Finally, the output rank list for the input elements can be utilized to generate a sorted order of the input items. This can be performed in a 
single pass over the input and the rank lists (i.e., $O(n)$ time), by placing each element in the bucket with number corresponding to its computed 
rank, thus providing a linear time and space sorting algorithm.

\subsection{Performance Analysis}
\label{ssec:anly}

In this section, we formally analyze the performance of the proposed $SONIK$ algorithm for the {\em all item rank generation} problem. Assume the 
input element list $\sigma$ to be composed of $n$ integers, with each element represented by $k$ bits.

From Algorithm~\ref{algo:sonik}, we observe that at iteration $i$, the $i^{th}$ least significant bit of the input elements are extracted (ll. 5) 
and correspondingly the lists $L_0^i$ and $L_1^i$ are constructed, which can be performed in $O(n)$ time with a single pass over the input 
list, $\sigma$. Each element in $\sigma$ contributes an entry in either of the two lists (ll. 7--11) and hence the combined size of the obtained 
lists is $| L_0^i \cup L_1^i | = |\sigma| = n$.

The two phases of $SONIK$ -- rank updation and rank consolidation -- are invoked separately on the two lists (ll. 13--16), and hence each of the 
procedures operates on a total of $n$ elements. The rank updation process iterates over each of the elements of the lists and appropriately 
updates the rank using the {\em updation parameters}, $\delta_0$ and $\delta_1$. Thus, the time taken by the {\em rank updation} procedure 
is bounded by the total size of the two lists, i.e., $O(n)$. 

The rank consolidation procedure, for each of the lists, creates an additional storage $E$ of size $2*n$ and consolidates (or removes ``holes'' from) 
the input element ranks by a single pass over $E$, with a time complexity of $O(n)$ as input rank is pre-sorted (as described earlier in 
Section~\ref{ssec:l0}). The computation of the operating 
variables can also be computed in linear time by two iterations over the lists $L_0^i$ and $L_1^i$. Hence, the total time taken at each iteration $i$ 
is bounded by $O(n)$ (compared to $O(n \log n)$ for state-of-the-art approaches), considering assignments and simple mathematical computations to be 
constant time operations.

Since, the total number of iterations is $k$ (number of bits for input elements), the total running time complexity of $SONIK$ thus becomes $O(nk)$. 
Considering $k$ to a small constant (in the order of few tens) in most practical scenarios, the run-time of $SONIK$ can be considered to be 
{\em linear} in the size of the input list (similar to the performance analysis of {\em radix-sort}) for which the element ranks are to be computed.

Also, the space complexity of $SONIK$ is linear in the number of input elements, as the space required by the construction of lists and the extra 
storage during rank consolidation is bounded by $O(n)$. Hence, our proposed algorithm provides a time and space efficient real-time approach for 
computing the ranks of input elements without element re-ordering, use of sorting methods, or other input characteristic assumptions.

\section{Generalizations and Optimizations}
\label{sec:opt}

In this section, we discuss the generalizability of $SONIK$ for handling different input element characteristics (such as negative values), and 
also possible avenues for optimizing the practical performance.

\noindent \textbf{(1) Negative Elements:} We initially assumed the input element list to be composed of non-negative numbers only. Modern computers 
store negative numbers using the {\em two's complement} binary representation technique~\cite{two}. In this approach, the most significant bit is set 
to $1$ to denote a negative value. As such, during the last iteration step of $SONIK$ (i.e., the $k^{th}$ iteration), the negative elements will 
have their bits set and will mistakenly be considered to have a large numerical value and thus assigned larger ranks. To address such input scenarios, 
we initially perform an $O(n)$ pre-processing step, wherein the largest negative number present in the input ($max_{neg}$) is identified and all the 
input elements are then incremented by $|max_{neg}|$ value. This ensures that the input list provided to $SONIK$ strictly contains non-negative elements. 

\noindent \textbf{(2) Duplicate Elements:} In general, the working of $SONIK$ inherently handles the presence of duplicate elements in the input list, 
and appropriately assigns the same rank to duplicate items. For example, the final element rank list $\rho = \{1,2,1\}$ will be generated corresponding 
to the input list $\sigma = \{2, 4, 2\}$. However, enforcing the {\em stability} criteria of sorting for obtaining the element ranks, the rank list 
for the above example should be $\{1,2,3\}$. To this extent, we use a variant of the counting sort procedure on the obtained element ranks as a 
post-processing step. The counting sort procedure operates on the generated rank list to provide a re-ordering of the ranks compliant with the 
stability property. Observe that, this approach requires $O(n)$ time and also $O(n)$ space as the final element ranks obtained range from $1$ to $n$. 

\noindent \textbf{(3) Optimizing $k$:} $SONIK$ utilized the weighted bit representation of elements to compare the numerical values of the input 
elements and generate the final rankings. As such, a mapping of the original input items to a different set of elements preserving the sorted 
ordering among the initial elements, and the rank computation thereon also produces a correct ranking list. We use this intuition for possible 
reduction in the number of iteration steps, 
thereby improving the practical run-time performance of our algorithm. The input elements are pre-processed to identify the minimum item, $min$, 
and all the input numbers are subsequently decremented by a value of $min$. Observe that, this translation of the input items preserves the sorted 
order among the elements, and it is on these ``new'' items that the element ranks are computed. This process might help in possibly reducing $k$, 
the number of bits required to represent the input elements (as the maximum value among the input elements is reduced), thereby enhancing the 
practical performance of $SONIK$.

\noindent \textbf{(4) Sorting Order:} In scenarios where the required element rankings are to be based on the non-increasing sorted order (previously 
we considered non-increasing order), the operational stages and working of $SONIK$ remain nearly unchanged, thus providing a robust framework. In such 
cases, only the generation of the lists $L_0^i$ and $L_1^i$ are reversed, that is, at iteration $i$, elements with the $i^{th}$ bit set to $1$ are 
considered for constructing the $L_0^i$ list, while the other elements are used for list $L_1^i$.

\section{Conclusions}
\label{sec:conc}

This paper presented $SONIK$, a {\em linear time and space} algorithm for computing the rankings (based on sorted order) for all elements of an input 
list. Our algorithm uses bit operations to compare the numerical values of elements for assigning appropriate ranks, based on two operational steps -- 
rank updation and rank consolidation. The ranks are efficiently generated in the order of element appearance in the input list, and hence eliminates the 
need for item re-ordering or intermediate sorting as compared to state-of-the-art approaches. $SONIK$ is also shown to demonstrate robust characteristics 
like {\em on-demand bit extraction} and {\em anytime algorithm}, also providing a linear-time sorting technique. Extension of our method to handle strings 
and other input data types provides an interesting area of future studies.

%%
%% Bibliography
%%
\bibliographystyle{plain}% the recommended bibstyle
\bibliography{rank}

\end{document}